\documentclass[twocolumn]{article}

\usepackage{setspace}
\usepackage{amsmath}
\usepackage{amsfonts}
\usepackage{amsthm}
\usepackage{tikz}
\usepackage{bm,times}
\usepackage{amssymb}
\usepackage{graphicx}
\usepackage{multirow}
\usepackage{varwidth}
\usepackage{placeins}
\usepackage{blindtext}
\usepackage{enumitem}
\usepackage{colortbl}
\usepackage{breakcites}
\usepackage{natbib}
\usepackage[]{algorithm}
\usepackage{algorithmic}
\usepackage{graphics} 
\usepackage{epsfig} 
\usepackage{pgfplots}
\pgfplotsset{compat=1.18}
\usepackage{pgfplotstable}
\usepackage[latin1]{inputenc}
\usepackage[english]{babel}
\usepackage{graphicx}
\usepackage{textcomp}
\usepackage{hyperref}

\allowdisplaybreaks

\theoremstyle{definition} 
\theoremstyle{definition} 
\theoremstyle{definition} \newtheorem{theorem}{Theorem}
\theoremstyle{definition} 
\theoremstyle{definition}

\newtheorem{assump}{Assumption}

\newtheorem*{remark}{Remark}

\providecommand{\keywords}[1]
{
  \small  
  \textbf{\textit{Keywords---}} #1
}

\title{Steady State Distributed Kalman Filter}
\author{Francisco F. C. Rego$^{1}$
\\
        \small $^{1}$\textit{Universidade Lus\'{o}fona, INESC INOV - Lab, Lisboa, Portugal.} \\
}
\date{} 

\begin{document}
\maketitle

\begin{abstract}
This paper addresses the synthesis of an optimal fixed-gain distributed observer for discrete-time linear systems over wireless sensor networks. The proposed approach targets the steady-state estimation regime and computes fixed observer gains offline from the asymptotic error covariance of the global distributed BLUE estimator. Each node then runs a local observer that exchanges only state estimates with its neighbors, without propagating error covariances or performing online information fusion. Under collective observability and strong network connectivity, the resulting distributed observer achieves optimal asymptotic performance among fixed-gain schemes. In comparison with covariance intersection-based methods, the proposed design yields strictly lower steady-state estimation error covariance while requiring minimal communication. Numerical simulations illustrate the effectiveness of the approach and its advantages in terms of accuracy and implementation simplicity.
\end{abstract} \hspace{10pt}

\keywords{Distributed State Estimation, Linear Systems, Wireless Sensor Networks}

\section{Introduction}

In many large-scale sensing applications, the state of a dynamical system must be estimated by a network of spatially distributed sensors connected through a communication graph. In such settings, no single sensor has access to sufficient information to reconstruct the global state, and communication constraints prevent the use of fully centralized architectures. Distributed state estimation aims at enabling each node to compute a local estimate of the global state by combining its own measurements with information received from neighboring nodes. Distributed estimation arises in applications such as network localization, environmental monitoring, and cooperative tracking under communication constraints (e.g., \cite{Akyildiz2002Survey,xu2002survey,Bahr2009Cooperative,Soares2013Joint} and references therein).

A wide range of distributed estimation algorithms has been proposed in the literature, particularly for linear time-varying and nonlinear systems. Among these, distributed Kalman filtering and covariance intersection (CI)-based methods are widely used due to their robustness to partial information and unknown correlations. Methods based on covariance intersection (CI) cope with unknown correlations by maintaining consistent upper bounds, at the cost of conservatism \cite{JulierUhlmann1997CI}. Alternative approaches based on distributed observer design have also been proposed for linear time-varying and communication-constrained settings, including constructibility Gramian-based methods and quantized observer schemes \cite{rego2023distributed,rego2016design}. However, these approaches typically rely on the online propagation and fusion of error covariances or information matrices, which results in significant computational and communication overhead. Importantly, this overhead persists even when the estimation process reaches steady state. Consensus-based distributed Kalman filtering schemes mitigate information imbalance through iterative consensus steps, but typically require additional communication and/or covariance-related computations \cite{OlfatiSaber2005DKF,Carli2007ConsensusDKF,BattistelliChisci2014KLA,Battistelli2015CBF,TalebiWerner2019EmbeddedConsensus}.

In many practical scenarios, the transient behavior of the estimator is of limited interest, while performance and simplicity in steady state are the primary concerns. This observation motivates a shift in perspective: rather than designing optimal time-varying distributed filters, one may seek to synthesize distributed observers with fixed gains that are optimal in the steady-state regime. Such observers can be implemented with substantially reduced information exchange, as they no longer require the online propagation of covariance information.

In this paper, we address the synthesis of an optimal fixed-gain distributed observer for discrete-time linear time-invariant systems. The proposed design exploits the asymptotic error covariance of the global distributed BLUE estimator to compute observer gains offline. During operation, each node runs a local observer and exchanges only state estimates with its neighbors. Under collective observability and strong connectivity assumptions, the resulting distributed observer achieves optimal asymptotic performance among fixed-gain schemes. In contrast to CI-based methods, the proposed approach yields strictly lower steady-state estimation error covariance while requiring minimal communication and computational effort.

\subsection{Notation}
Throughout this paper, we will use the symbol $\otimes$ for the Kronecker product. The notation $|\cdot|$ represents the cardinality of a set. $I_{M}$ denotes an $M\times M$ identity matrix, and $\boldsymbol{1}$ represents an $N\times 1$ vector with ones in every entry. When clear from the context, the superscript of a variable, e.g. $x^i$, refers to the node index of that variable, where $i\in\{1, \ldots , N\}:=\mathcal{N}$. The operator $\operatorname{row}(\cdot)$ is defined by $\operatorname{row}(X^i):=[X^1, \ldots , X^N]$, the operator $\operatorname{col}(\cdot)$ represents the column operator, i.e. $\operatorname{col}(X^i):=\operatorname{row}({X^i}^\intercal)^\intercal$, and the operator $\operatorname{diag}(X^i)$ yields a block diagonal matrix whose diagonal elements are $X^1, \ldots , X^N$. 

\section{Mathematical background}
\label{sec:math}

This section briefly recalls the elements of Best Linear Unbiased Estimation (BLUE) and Kalman filtering that are required for the synthesis of fixed-gain distributed observers. The presentation is intentionally concise and focuses on the role of error covariances in the computation of optimal observer gains.

\subsection{Best Linear Unbiased Estimation}

Consider the estimation of a deterministic variable $x \in \mathbb{R}^n$ from linear noisy measurements
\begin{equation*}
y = Fx + \epsilon,
\end{equation*}
where $\epsilon \sim \mathcal{N}(\boldsymbol{0},P)$ with $P \succ 0$, and $F \in \mathbb{R}^{m \times n}$ has full column rank. The Best Linear Unbiased Estimate (BLUE) of $x$ is given by
\begin{equation}
\label{eq:BLUE}
\hat{x} = (F^T P^{-1} F)^{-1} F^T P^{-1} y,
\end{equation}
and minimizes the estimation error covariance among all linear unbiased estimators \cite{verhaegen2007filtering,kailath2000linear,rao1973linear}. This result will be used repeatedly to characterize optimal gains in both centralized and distributed estimation settings.

\subsection{Kalman filtering and steady-state covariances}

Consider the discrete-time linear system
\begin{equation}
x_{t+1} = A x_t + w_t,
\end{equation}
with measurements
\begin{equation*}
y_t = C x_t + v_t,
\end{equation*}
where $w_t \sim \mathcal{N}(\boldsymbol{0},Q)$ and $v_t \sim \mathcal{N}(\boldsymbol{0},R)$, with $Q \succ 0$ and $R \succ 0$. The Kalman filter recursively computes a minimum-variance state estimate by alternating prediction and correction steps. The associated estimation error covariance $P_t$ evolves according to
\begin{equation}
P_t = \left( C^T R^{-1} C + \bar{P}_t^{-1} \right)^{-1}, \qquad
\bar{P}_{t+1} = A P_t A^T + Q.
\end{equation}

Under standard detectability assumptions on $(A,C)$, the covariance sequence converges to a constant matrix $P$, and the corresponding observer gains converge to fixed values. This steady-state property motivates the synthesis of fixed-gain observers, where optimal gains are computed offline from asymptotic error covariances and remain constant during operation. In the sequel, this principle is exploited to design distributed observers that achieve optimal steady-state performance while requiring reduced online information exchange.

\section{Problem Definition}
\label{sec:prob_def}

This section formulates the distributed estimation problem addressed in this paper, with emphasis on the steady-state regime and on the synthesis of fixed-gain distributed observers.

\subsection{Networked system}
\label{sec:sys_net}

We consider a discrete-time linear time-invariant system monitored by a network of sensing nodes. The setup consists of: (i) a discrete-time dynamical system; (ii) a set of nodes $\mathcal{N}$ with cardinality $N := |\mathcal{N}|$, each endowed with local sensing capabilities; and (iii) a communication network described by a directed graph $(\mathcal{N},\mathcal{A})$.

The system dynamics are given by
\begin{equation}
\label{eq:sys_dyn}
x_{t+1} = A x_t + w_t,
\end{equation}
where $x_t \in \mathbb{R}^n$ is the system state and $w_t \sim \mathcal{N}(\boldsymbol{0},Q)$ is the process noise. Each node $i \in \mathcal{N}$ acquires measurements of the form
\begin{equation}
\label{eq:meas_eq}
y_t^i = C^i x_t + v_t^i,
\end{equation}
where $v_t^i \sim \mathcal{N}(\boldsymbol{0},R^i)$ denotes the measurement noise.

\begin{assump}
\label{asmp:col_obs}
The pair $(A,C)$ is detectable, where $C := \operatorname{col}(C^i)$.
\end{assump}

Assumption~\ref{asmp:col_obs} requires only collective detectability and does not impose detectability of $(A,C^i)$ at each individual node.

\begin{assump}
\label{asmp:stochastic}
The process and measurement noises satisfy
\[
w_t \sim \mathcal{N}(\boldsymbol{0},Q), \qquad v_t^i \sim \mathcal{N}(\boldsymbol{0},R^i), \quad i \in \mathcal{N},
\]
with $Q \succ 0$ and $R^i \succ 0$, and are mutually uncorrelated across nodes and with the process noise.
\end{assump}

We further assume synchronized operation and ideal communication.

\begin{assump}
\label{asmp:synch}
All nodes are synchronized and operate at the same sampling rate.
\end{assump}

\begin{assump}
\label{asmp:finite_com}
Between two consecutive sampling instants, nodes can broadcast one message according to the communication graph $\mathcal{A}$.
\end{assump}

\subsection{Steady-state distributed estimation objective}

Under Assumptions~\ref{asmp:col_obs}--\ref{asmp:finite_com}, each node $i \in \mathcal{N}$ computes a local estimate $\hat{x}_t^i$ of the global state $x_t$ using its own measurements and information received from neighboring nodes.

In contrast with general distributed filtering problems, the focus of this paper is on the steady-state estimation regime. Specifically, we seek to synthesize distributed observers with constant gains that achieve optimal asymptotic estimation performance. The objective is to design observer gains such that, in steady state, the local estimation error covariances $P_t^i := E[e_t^i (e_t^i)^T]$ are minimized, subject to the information exchange constraints imposed by the network.

Throughout the paper, $\hat{x}_t^i$ denotes the state estimate at node $i$, $\bar{x}_t^i$ the corresponding predicted state, and $e_t^i := \hat{x}_t^i - x_t$ the estimation error. Global error and covariance quantities are defined by stacking local variables across nodes. These definitions allow us to characterize optimality in terms of the asymptotic error covariance of the global distributed BLUE estimator, which forms the basis for the fixed-gain observer synthesis developed in Section~\ref{sec:DKF}.

\section{Distributed observer synthesis based on BLUE}
\label{sec:DKF}

This section addresses the synthesis of distributed observers for linear systems with Gaussian process and measurement noise. Rather than proposing a generic distributed Kalman filtering algorithm, we use the classical BLUE formulation~\eqref{eq:BLUE} as a design tool to characterize optimal steady-state performance and to derive observer gains. We first recall a time-varying distributed BLUE formulation, which serves as a reference benchmark but requires the online propagation of global covariance information. We then show that, for linear time-invariant systems, the asymptotic error covariance of the global distributed BLUE estimator can be exploited to compute fixed observer gains offline. The resulting scheme is a steady-state distributed observer that exchanges only state estimates among neighboring nodes, achieves optimal asymptotic performance within the class of fixed-gain distributed observers, and significantly reduces communication and computational overhead when compared to consensus-based or covariance-intersection approaches.

\subsection{Time-varying distributed BLUE formulation}
\label{sec:KC}
We begin by recalling a distributed formulation of the BLUE estimator with time-varying gains, which serves as a reference for the development of the proposed fixed-gain observer. This formulation achieves optimal estimation performance in a stochastic sense, but requires the online propagation of global covariance and cross-covariance information, rendering it impractical for large-scale networks.

In detail, the method works as follows. Assume that we begin with an estimate $\bar{x}^i_t$ of the state $x_t$ with a Gaussian distribution and the following characteristics:
\begin{align}
\label{eq:KC_estimate}
E\left[\bar{x}^i_t-x_t\right]&=0, \\
E\left[\left(\bar{x}^i_t-x_t\right)\left(\bar{x}^j_t-x_t\right)^{T}\right]&:=\bar{P}^{ij}_t.
\end{align}
We define the global covariance matrix as $\bar{P}_t:=\left[\bar{P}^{ij}_t\right]_{ij\in\mathcal{N}}$ and the vector $\bar{x}_t:=\operatorname{col}\left(\bar{x}^i_t\right)$. The following theorem describes how to compute the BLUE of the state at the next time $\bar{x}^i_{t+1}$, given the local measurement $y^i_t$ and the estimates of the neighbours $\bar{x}^j_{t}$, $j\in\mathcal{N}^i$, as well as the global covariance matrix $P_{t+1}:=E\left[\bar{e}_{t+1}\bar{e}^{T}_{t+1}\right]$, where $\bar{e}_t:=\operatorname{col}(\bar{e}^i_t)$ with $\bar{e}^i_t:=\bar{x}^i_t-x_t$.

\begin{theorem}
\label{thm:KC}
Consider the matrices $\eta_i\in\mathbb{R}^{\left|\mathcal{N}^i\right|n\times Nn}$, defined by $\eta_i:=\operatorname{row}\left(\boldsymbol{e}_j,j\in\mathcal{N}^i\right)\otimes I_n$, where vector $\boldsymbol{e}_i$ is a column vector with all entries equal to $0$ except for entry $i$ which is $1$, $\boldsymbol{1}_i \in \mathbb{R}^{\left|\mathcal{N}^i\right|\times n}$ defined by $\boldsymbol{1}_i:=\boldsymbol{1}\otimes I_n$, and $\Gamma_{ij}\in\mathbb{R}^{\left|\mathcal{N}^i\right|n\times n}$ defined by $\Gamma_{ij}:=\eta_i\left(\boldsymbol{e}_j\otimes I_n\right)$.

Define $\tilde{\Omega}^i_t:=\boldsymbol{1}_i^T\left(\eta_i\bar{P}_t\eta_i^T\right)^{\dagger}\boldsymbol{1}_i$ and $\Omega^i_t:=\tilde{\Omega}^i_t+S^i$, where $\left(\eta_i\bar{P}_t\eta_i^T\right)^{\dagger}$ is the Moore-Penrose pseudo-inverse of $\eta_i\bar{P}_t\eta_i^T$\footnote{which is equivalent to $\left(\eta_i\bar{P}_t\eta_i^T\right)^{-1}$ if $\eta_i\bar{P}_t\eta_i^T$ is full rank}.

Given the estimates $\bar{x}^i_t$; $i\in\mathcal{N}$, satisfying \eqref{eq:KC_estimate} and the global covariance matrix $\bar{P}_t:=\left[\bar{P}^{ij}_t\right]_{ij\in\mathcal{N}}$ and defining the correction terms $s^{i}_t:=\left(C^i\right)^{T}V^iy^{i}_t$ and $S^i:=\left(C^i\right)^{T}V^iC^i$, the evolution of the BLUE of the state at time $t+1$, at node $i$, given the local measurement $y^i_t$ and the estimates of the neighbours $\bar{x}^j_{t}$, $j\in\mathcal{N}^i$, is described by
\begin{equation}
\bar{x}^{i}_{t+1}=A\left(\Omega^i_t\right)^{-1}\left(\sum_{j\in\mathcal{N}^i}\boldsymbol{1}_i^T\left(\eta_i\bar{P}_t\eta_i^T\right)^{\dagger}\Gamma_{ij}\bar{x}^j_t+s^i_t\right)
\end{equation}
whereas global covariance matrix is described by
\begin{equation}
\label{eq:P_KC_update}
\bar{P}_{t+1}=T_t\bar{P}_tT^{T}_t+\operatorname{diag}\left(AP^i_tS^{i}P^i_tA^{T}\right)+\left(\boldsymbol{1}_N\boldsymbol{1}_N^T\right)\otimes Q,
\end{equation}
where $T_t$ is defined as $T_t:=\left[T^{ij}_t\right]$, with
\begin{align*}
T^{ij}_t:=\left\{\begin{array}{lr}A P^i_t\boldsymbol{1}_i^T\left(\eta_i\bar{P}_t\eta_i^T\right)^{\dagger}\Gamma_{ij},& j\in \mathcal{N}^i \\
0,& j\notin \mathcal{N}^i \end{array}\right.
\end{align*}
\end{theorem}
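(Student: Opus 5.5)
We prove the statement by casting the update at node $i$ as a single instance of the BLUE problem \eqref{eq:BLUE} and then propagating the errors through the dynamics. Throughout, $V^i$ is read as $(R^i)^{-1}$ and $P^i_t := (\Omega^i_t)^{-1}$; the theorem uses both symbols without defining them.

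\textbf{Step 1: stack node $i$'s data.} Node $i$ has the neighbour predictions $\eta_i\bar{x}_t = \boldsymbol{1}_i x_t + \eta_i\bar{e}_t$ and its own measurement $y^i_t = C^i x_t + v^i_t$. Write these as $y = Fx_t + \epsilon$ with
\begin{align*}
F=\begin{bmatrix}\boldsymbol{1}_i\\ C^i\end{bmatrix},\qquad
\epsilon=\begin{bmatrix}\eta_i\bar{e}_t\\ v^i_t\end{bmatrix}.
\end{align*}
By Assumption~\ref{asmp:stochastic}, $v^i_t$ is independent of past noises. The errors $\bar{e}_t$ depend only on noises up to time $t-1$, so $v^i_t$ is uncorrelated with $\bar{e}_t$. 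The noise covariance is therefore block diagonal: $\operatorname{diag}(\eta_i\bar{P}_t\eta_i^T, R^i)$.

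\textbf{Step 2: apply the BLUE formula.} Substituting into \eqref{eq:BLUE} gives the information matrix
\[
F^T P^{-1}F=\boldsymbol{1}_i^T(\eta_i\bar{P}_t\eta_i^T)^{\dagger}\boldsymbol{1}_i+S^i=\Omega^i_t.
\]
The information vector is
\[
\boldsymbol{1}_i^T(\eta_i\bar P_t\eta_i^T)^\dagger\eta_i\bar x_t+s^i_t .
\]
Since $\eta_i\bar{x}_t=\sum_{j\in\mathcal{N}^i}\Gamma_{ij}\bar{x}^j_t$, this vector has exactly the form in the statement. The filtered estimate is $\hat{x}^i_t=(\Omega^i_t)^{-1}(\cdot)$, with error covariance $P^i_t$. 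Because $w_t$ is zero mean and independent of $\hat{x}^i_t$, the optimal prediction is $\bar{x}^i_{t+1}=A\hat{x}^i_t$, which is the stated recursion.

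\textbf{Main obstacle: the singular case.} When $\eta_i\bar{P}_t\eta_i^T$ is singular, for example because neighbour estimates are perfectly correlated, \eqref{eq:BLUE} does not apply as written. I would handle this with the standard generalized Gauss--Markov argument.
\begin{itemize}
\item Note that $\eta_i\bar{e}_t$ lies almost surely in the range of $\eta_i\bar{P}_t\eta_i^T$.
\item Restrict to that range, or equivalently take a limit $\bar{P}_t+\varepsilon I$ as $\varepsilon\to0$. Both routes give the pseudo-inverse formula.
\item This requires that $\boldsymbol{1}_i$ be compatible with the range, so that unbiasedness is preserved. I would argue this from the structure of $\bar{P}_t$: identical errors produce identical columns.
\end{itemize}
I also need $\Omega^i_t$ to be invertible. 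This holds because the stacked $\bar{P}$ includes the common term $(\boldsymbol{1}\boldsymbol{1}^T)\otimes Q$ with $Q\succ0$, which keeps the neighbour information nondegenerate along $\boldsymbol{1}_i$.

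\textbf{Step 3: error propagation.} Using $\Omega^i_t\boldsymbol{1}$-consistency, i.e. $(\Omega^i_t)^{-1}\big(\boldsymbol{1}_i^T(\cdot)^\dagger\boldsymbol{1}_i+S^i\big)=I$, the filtered error is
\[
\hat{e}^i_t=P^i_t\Big(\sum_j \boldsymbol{1}_i^T(\cdot)^\dagger\Gamma_{ij}\bar{e}^j_t+(C^i)^TV^iv^i_t\Big).
\]
Then $\bar{e}^i_{t+1}=A\hat{e}^i_t-w_t$. Stacking over nodes gives
\[
\bar{e}_{t+1}=T_t\bar{e}_t+\operatorname{col}\big(AP^i_t(C^i)^TV^iv^i_t\big)-\boldsymbol{1}_N\otimes w_t .
\]
The three terms are mutually uncorrelated, and the $v^i$ are uncorrelated across nodes. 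So the middle term contributes a block-diagonal covariance:
\[
AP^i_t(C^i)^TV^iR^iV^iC^iP^i_tA^T = AP^i_tS^iP^i_tA^T .
\]
The last term contributes $(\boldsymbol{1}\boldsymbol{1}^T)\otimes Q$. Together these give \eqref{eq:P_KC_update}.
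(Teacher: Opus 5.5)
Your proposal is correct and follows essentially the paper's route: BLUE fusion of the neighbours' predictions with the local measurement, prediction through $A$, and stacking of the error dynamics $\bar{e}^i_{t+1}=A\hat{e}^i_t-w_t$ to obtain \eqref{eq:P_KC_update}. The differences are minor. You fuse everything in one stacked BLUE, whereas the paper first forms $\tilde{x}^i_t$ from the neighbours and then fuses it with $y^i_t$; this is equivalent precisely because $v^i_t$ is uncorrelated with $\bar{e}_t$, which you state explicitly and the paper leaves implicit. Your handling of the singular pseudo-inverse case is more careful than the paper's, but the clean justification of $\mathcal{R}(\boldsymbol{1}_i)\subseteq\mathcal{R}(\eta_i\bar{P}_t\eta_i^T)$ is the bound $\eta_i\bar{P}_t\eta_i^T\succeq(\boldsymbol{1}\boldsymbol{1}^T)\otimes Q$ for $t\ge 1$ (with $\bar{P}_0\succ 0$ at $t=0$), which also yields invertibility of $\Omega^i_t$. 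The ``identical columns'' heuristic you give does not establish this in general.
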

\begin{proof}
See the Appendix.
\end{proof}
Although this formulation achieves optimal estimation performance, it requires each node to access global covariance and cross-covariance information, whose dimension grows with the network size. Moreover, the gains must be recomputed online and depend on the full covariance matrix $\bar{P}_t$. These requirements make the approach unsuitable for long-term operation and motivate the search for fixed-gain distributed observers.

\subsection{Fixed-gain steady-state distributed observer}
\label{sec:SSDKF}

We now depart from time-varying distributed filtering and address the synthesis of a fixed-gain distributed observer operating in steady state. The objective is no longer to recursively propagate error covariances or to adapt observer gains online, but rather to compute constant gains offline from asymptotic error statistics and to deploy them in a purely state-based distributed observer.

Specifically, we consider linear time-invariant systems for which the global covariance recursion associated with the time-varying distributed BLUE formulation in Section~\ref{sec:KC} converges to a stationary matrix $\bar{P}$. This asymptotic covariance characterizes the steady-state performance of the global distributed BLUE estimator and serves as the reference object for observer synthesis. Exploiting this property, we show that fixed observer gains can be computed offline and subsequently used by each node to perform distributed state estimation by exchanging only state estimates with neighboring nodes.

\subsubsection{Offline gain computation}

Assume that the global covariance recursion~\eqref{eq:P_KC_update} converges to a steady-state matrix $\bar{P} \succ 0$. In steady state, the information matrices $\Omega_t^i$ become constant and are given by
\[
\Omega^i := \boldsymbol{1}_i^T \left(\eta_i \bar{P} \eta_i^T\right)^{\dagger} \boldsymbol{1}_i + S^i,
\]
where $S^i := (C^i)^T V^i C^i$.

The fixed observer gains are then obtained by freezing the time-varying gains of Section~\ref{sec:KC} at their asymptotic values. For each node $i \in \mathcal{N}$ and each neighbor $j \in \mathcal{N}^i$, define
\[
D^{ij} := A (\Omega^i)^{-1}
           \boldsymbol{1}_i^T \left(\eta_i \bar{P} \eta_i^T\right)^{\dagger}
           \Gamma_{ij},
\]
and
\[
F^i := A (\Omega^i)^{-1} (C^i)^T V^i .
\]

These matrices are computed once, offline, using knowledge of the system model, noise statistics, and network topology. No covariance or cross-covariance information is required during online operation.

For clarity, the offline gain computation procedure is summarized in Algorithm~\ref{alg:offline_gains}.

\begin{algorithm}[H]
\caption{Offline computation of fixed distributed observer gains}
\label{alg:offline_gains}
\begin{algorithmic}[1]
\REQUIRE System matrices $A$, $\{C^i,R^i\}_{i\in\mathcal{N}}$, process noise covariance $Q$, network topology $\mathcal{A}$
\ENSURE Fixed gains $\{D^{ij},F^i\}$

\STATE Initialize $\bar{P}_0 \succ 0$
\REPEAT
    \STATE Update $\bar{P}_{k+1}$ using the global covariance recursion~\eqref{eq:P_KC_update}
\UNTIL{$\|\bar{P}_{k+1}-\bar{P}_k\| < \varepsilon$}
\STATE Set $\bar{P} \leftarrow \bar{P}_{k+1}$

\FOR{each node $i \in \mathcal{N}$}
    \STATE Compute $\Omega^i := \boldsymbol{1}_i^T(\eta_i \bar{P} \eta_i^T)^{\dagger}\boldsymbol{1}_i + S^i$
    \FOR{each neighbor $j \in \mathcal{N}^i$}
        \STATE Compute $D^{ij} := A (\Omega^i)^{-1}
        \boldsymbol{1}_i^T (\eta_i \bar{P} \eta_i^T)^{\dagger} \Gamma_{ij}$
    \ENDFOR
    \STATE Compute $F^i := A (\Omega^i)^{-1} (C^i)^T V^i$
\ENDFOR
\end{algorithmic}
\end{algorithm}

\subsubsection{Online distributed observer}

Once the fixed gains $\{D^{ij},F^i\}$ have been computed offline, each node runs a simple distributed observer. At each time step, node $i$ receives the current state estimates from its neighbors and combines them with its own measurement to update its estimate according to
\begin{equation}
\label{eq:fixed_gain_observer}
\hat{x}^i_{t+1}
=
\sum_{j \in \mathcal{N}^i} D^{ij} \hat{x}^j_t
+
F^i y^i_t .
\end{equation}

Importantly, the online implementation does not require the propagation or fusion of covariance matrices, nor any iterative consensus procedure. The only information exchanged between nodes consists of state estimates of dimension $n$.

The online execution at a generic node is summarized in Algorithm~\ref{alg:online_node}.

\begin{algorithm}[H]
\caption{Online fixed-gain distributed observer at node $i$}
\label{alg:online_node}
\begin{algorithmic}[1]
\REQUIRE Fixed gains $\{D^{ij},F^i\}$, neighbor set $\mathcal{N}^i$
\STATE Initialize $\hat{x}^i_0$

\FOR{each time step $t$}
    \STATE Receive $\hat{x}^j_t$ from neighbors $j \in \mathcal{N}^i$
    \STATE Acquire local measurement $y^i_t$
    \STATE Update estimate:
    \[
    \hat{x}^i_{t+1}
    =
    \sum_{j \in \mathcal{N}^i} D^{ij} \hat{x}^j_t
    +
    F^i y^i_t
    \]
\ENDFOR
\end{algorithmic}
\end{algorithm}

The proposed scheme constitutes a distributed observer with constant gains whose online implementation requires minimal communication and computation. The observer gains are computed offline from the asymptotic error covariance of the global distributed BLUE estimator and remain fixed during operation. As a result, the online execution requires only the exchange of state estimates among neighboring nodes, with no propagation or fusion of covariance information.

The design explicitly targets steady-state operation and provides a clear separation between offline observer synthesis and online state estimation. Within the class of fixed-gain distributed observers compatible with the network topology, the proposed approach achieves the same steady-state performance as the global distributed BLUE estimator whenever convergence is attained, while avoiding the conservatism and communication overhead typically associated with covariance-intersection and consensus-based filtering methods.

\begin{remark}[Convergence considerations]
\label{rem:convergence}

A formal proof of convergence of the global covariance recursion~\eqref{eq:P_KC_update} and of the resulting fixed-gain distributed observer is beyond the scope of this paper. Nevertheless, numerical experiments consistently indicate convergence of the covariance iteration and stable steady-state behavior of the proposed observer. These empirical observations suggest that the proposed fixed-gain design is robust in practice, even though a complete theoretical characterization of its convergence properties remains an open problem.
\end{remark}

\section{Numerical results}
\label{sec:example}

In this section, we illustrate the performance of the proposed distributed observer through numerical experiments. The objective is twofold: to empirically assess the steady-state behavior of the fixed-gain design and to compare its estimation performance with that of a consensus-based distributed Kalman filtering method~\cite{battistelli_consensus-based_2015}. In addition, we report empirical convergence properties of the global covariance iteration used for the offline gain computation.

\subsection*{Simulation setup}

We consider a distributed system of the form~\eqref{eq:sys_dyn}--\eqref{eq:meas_eq} that is collectively observable but not locally observable. The network consists of $N=20$ nodes. The system dynamics are defined by the identity matrix $A := I_n$, corresponding to decoupled agent dynamics.

Let $e^i$ denote the canonical row vector with a $1$ in position $i$ and zeros elsewhere. The local observation matrices are defined as
\begin{equation*}
C^i :=
\begin{bmatrix}
e^{i\intercal} - e^{(i+1)\intercal} \\
e^{(i-1)\intercal} - e^{i\intercal}
\end{bmatrix},
\end{equation*}
except for $i=1$, where $i-1$ is replaced by $N$, and for $i=N$, where $C^N := e^{N\intercal}$. This sensing configuration introduces coupling through the measurements while preserving decoupled dynamics, resulting in collective but not local observability and therefore requiring distributed state estimation.

Process and measurement noise are modeled as zero-mean Gaussian disturbances with covariances $Q = I_{2N}$ and $R^i = I_{m_i}$, respectively. The initial state is drawn from a Gaussian distribution with covariance $P_0 = 10^{6} I_{2N}$. The communication network is an undirected ring, where each node exchanges information with its two immediate neighbors.

\subsection*{Empirical convergence of the global covariance}

The fixed observer gains are computed offline from the steady-state solution of the global covariance recursion. Since a general convergence proof is not available, we assessed convergence empirically through a Monte Carlo study. Using the same baseline configuration defined previously, we performed $10$ independent runs with randomized system and measurement matrices. The covariance iteration was deemed converged when the Frobenius norm of the difference between successive iterates fell below $10^{-4}$.

Across all runs, convergence was consistently observed. The number of iterations required to reach convergence ranged from $36$ to $124$, with a mean of $73.3$ iterations and a median of $65.5$ iterations. These results indicate that, for the considered class of systems and network topologies, the offline covariance computation converges reliably within a moderate number of iterations, supporting the practical feasibility of the proposed fixed-gain design.

\subsection*{Estimation performance}

Figure~\ref{fig:results} compares the average estimation error norm
\[
\frac{1}{N} \sum_{i \in \mathcal{N}} \lVert \hat{x}^i_t - x_t \rVert
\]
obtained with different estimation strategies: a centralized Kalman filter, the consensus-based distributed Kalman filter of~\cite{battistelli_consensus-based_2015}, the proposed distributed estimator with time-varying gains, and its steady-state fixed-gain counterpart.

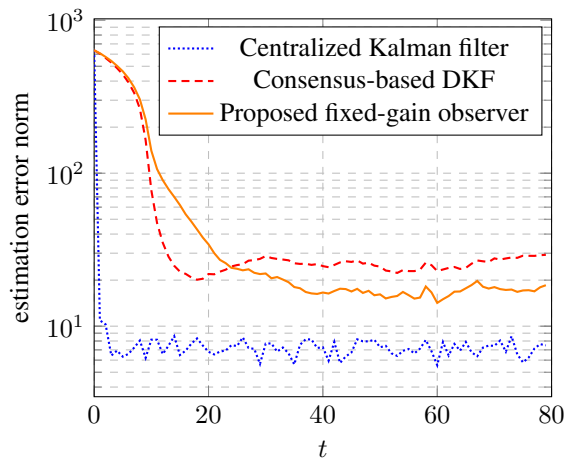
\begin{figure}[ht!]
\centering
\pgfplotstableread[header=false]{necent.dat}{\loadedtable}
\pgfplotstabletranspose{\necent}{\loadedtable}
\pgfplotstableread[header=false]{nedist.dat}{\loadedtable}
\pgfplotstabletranspose{\nedist}{\loadedtable}
\pgfplotstableread[header=false]{nedistsskf.dat}{\loadedtable}
\pgfplotstabletranspose{\nedistsskf}{\loadedtable}
\begin{tikzpicture}
\begin{axis}[
    ymode=log,
    xlabel={$t$},
    ylabel={estimation error norm},
    xmin=0, xmax=80,
    yminorgrids=true,
    xmajorgrids=true,
    grid style=dashed,
    width=0.95\linewidth,
]
    \addplot [blue,densely dotted,thick] table {\necent};
    \addplot [red,densely dashed,thick] table {\nedist};
    \addplot [orange,solid,thick] table {\nedistsskf};
    \legend{Centralized Kalman filter,Consensus-based DKF,Proposed fixed-gain observer}
\end{axis}
\end{tikzpicture}
\caption{Average norm of the estimation errors for different estimation strategies.}
\label{fig:results}
\end{figure}

The results show that the proposed fixed-gain distributed observer consistently achieves lower steady-state estimation error norms than the consensus-based distributed Kalman filter. Moreover, its performance closely matches that of the time-varying distributed formulation, while requiring substantially reduced online computation and communication. These observations provide empirical evidence that the proposed fixed-gain design attains the steady-state performance targeted by the offline covariance synthesis, in agreement with the convergence discussion presented earlier.

\section{Conclusion}
\label{sec:conclusion}

This paper addressed the synthesis of fixed-gain distributed observers for linear time-invariant systems under collective observability and limited communication. Using the BLUE formulation as a design tool, observer gains were computed offline from the steady-state solution of a global covariance recursion, leading to a distributed observer that exchanges only state estimates during online operation.

Numerical results demonstrated that the proposed observer exhibits reliable steady-state behavior and consistently outperforms a consensus-based distributed Kalman filtering approach in terms of estimation accuracy, while significantly reducing online computational and communication requirements. Although a general convergence proof for the covariance recursion remains an open problem, empirical evidence indicates that convergence is achieved in a moderate number of iterations for the considered scenarios. These results highlight the practical effectiveness of fixed-gain observer synthesis as a viable alternative to adaptive distributed filtering in LTI settings.

\section*{Acknowledgments}
The work of Francisco Rego was funded by national funds through FCT -- Fundação para a Ciência e a Tecnologia, I.P., under projects/supports UID/06486/2025 (\url{https://doi.org/10.54499/UID/06486/2025}), UID/PRR/06486/2025 (\url{https://doi.org/10.54499/UID/PRR/06486/2025}), and UID/PRR2/06486/2025 (\url{https://doi.org/10.54499/UID/PRR2/06486/2025}), and by COFAC/ILIND/COPELABS through the Seed Funding Program (7th edition), project LoRaMAR (\url{https://doi.org/10.62658/COFAC/ILIND/COPELABS/1/2025}).

\appendix
\section{Proof of Theorem \ref{thm:KC}}
\begin{proof}
After the agents communicate among themselves, the nodes compute $\tilde{x}^i_t$, the BLUE estimate given the estimates of the neighbours, i.e. given $\bar{x}^j_t,j\in \mathcal{N}^i$, where $\mathcal{N}^i$ is the set of neighbours of $i$. The estimate $\tilde{x}^i_t$ is the BLUE of $x_t$ such that $\eta_i\bar{x}_t=\boldsymbol{1}_ix_t+\epsilon_t$ with $\epsilon_t \sim \mathcal{N}\left(0,\eta_i\bar{P}_t\eta_i^T\right)$. Therefore from \eqref{eq:BLUE}, if $\bar{P}^{ii}_t$ is full rank, then
\begin{equation*}
\tilde{x}^i_t=\left(\bar{\Omega}^i_t\right)^{-1}\boldsymbol{1}_i^T\left(\eta_i\bar{P}_t\eta_i^T\right)^{\dagger}\eta_i\bar{x}_t
\end{equation*}
and $\tilde{x}^i_t$ has the following characteristics:
\begin{align*} 
E\left[\tilde{x}^i_t-x_t\right]&=0, \\
E\left[\left(\tilde{x}^i_t-x_t\right)\left(\tilde{x}^i_t-x_t\right)^T\right]&=\left(\tilde{\Omega}^i_t\right)^{-1},
\end{align*}
where $\tilde{\Omega}^i_t:=\boldsymbol{1}_i^T\left(\eta_i\bar{P}_t\eta_i^T\right)^{\dagger}\boldsymbol{1}_i$. The estimate $\tilde{x}^i_t$ can also be expressed as
\begin{equation*}
\tilde{x}^i_t=\left(\bar{\Omega}^i_t\right)^{-1}\boldsymbol{1}_i^T\left(\eta_i\bar{P}_t\eta_i^T\right)^{\dagger}\sum_{j\in\mathcal{N}^i}\Gamma_{ij}\bar{x}^j_t.
\end{equation*}
After taking a measurement, the nodes compute $\hat{x}^i_t$, the BLUE of $x_t$ given $\tilde{x}^i_t$ and $y^i_t$. This can be obtained directly as
\begin{equation*}
\hat{x}^i_t=\left(\Omega^i_t\right)^{-1}\left(\bar{\Omega}^i_t\tilde{x}^i_t+\left(C^{i}\right)^T\left(R^i\right)^{-1}y^i_t\right).
\end{equation*}
Finally, the nodes compute the prediction, i.e. the estimate of $x_{t+1}$ given $\hat{x}^i_t$, which is simply
\begin{equation*}
\bar{x}^{i}_{t+1}=A\hat{x}^i_t.
\end{equation*}
It now remains to compute the next global covariance matrix $\bar{P}_{t+1}$. For this purpose we analyze the error dynamics, i.e. the dynamics of $\bar{e}^i_t:=\bar{x}^i_t-x_t$. The dynamics of the estimate $\bar{x}^{i}_{t}$ can be written as follows:
\begin{align*}
\bar{x}^{i}_{t+1}&=A\left(\Omega^i_t\right)^{-1}\sum_{j\in\mathcal{N}^i}\boldsymbol{1}_i^T\left(\eta_i\bar{P}_t\eta_i^T\right)^{\dagger}\Gamma_{ij}\bar{x}^j_t\\
&+A\left(\Omega^i_t\right)^{-1}S^ix_t\\
&+A\left(\Omega^i_t\right)^{-1}\left(C^{i}\right)^T\left(R^i\right)^{-1}v^i_t.
\end{align*}
Given that $\bar{\Omega}^i_t=\sum_{j\in\mathcal{N}^i}\boldsymbol{1}_i^T\left(\eta_i\bar{P}_t\eta_i^T\right)^{\dagger}\Gamma_{ij}$ (from the fact that $\sum_{j\in\mathcal{N}^i}\Gamma_{ij}=\boldsymbol{1}_i$) the state dynamics can be expressed as
\begin{align*}
x_{t+1}&=Ax_t+w_t\\
&=A\left(\Omega^i_t\right)^{-1}\sum_{j\in\mathcal{N}^i}\boldsymbol{1}_i^T\left(\eta_i\bar{P}_t\eta_i^T\right)^{\dagger}\Gamma_{ij}x_t\\
&\quad+A\left(\Omega^i_t\right)^{-1}S^ix_t+w_t.
\end{align*}
The error dynamics can then be written as
\begin{align}
e^{i}_{t+1}&= A\left(\Omega^i_t\right)^{-1}\sum_{j\in\mathcal{N}^i}\boldsymbol{1}_i^T\left(\eta_i\bar{P}_t\eta_i^T\right)^{\dagger}\Gamma_{ij}e^j_t\nonumber\\
&+A\left(\Omega^i\right)^{-1}\left(C^{i}\right)^T\left(R^i\right)^{-1}v^i_t-w_t.
\end{align}
Defining $v_t=\operatorname{col}(v^i_t)$ yields $e_{t+1}=T_te_t+Kv_t+\boldsymbol{1}_N\otimes I_n w_t$. with $K:=\operatorname{diag}\left(A\left(\Omega^i_t\right)^{-1}\left(C^{i}\right)^T\left(R^i\right)^{-1}\right)$.
Finally, one computes the update law for $\bar{P}_t$ as in 
\eqref{eq:P_KC_update}.
\end{proof}

\bibliographystyle{plainnat}
\bibliography{tail/bibliography}
\end{document}